\newtheorem{definition}{Definition:}
\newtheorem{lemma}{Lemma:}
\newenvironment{proof}{{\bf Proof:}}{\hfill {\fbox{}}}
\begin{document}
\title{BITS-Tree -- An Efficient Data Structure for Segment Storage and Query Processing}
\author{K.S. Easwarakumar\thanks{Corresponding Author.Email: easwara@cs.annauniv.edu} and T.Hema\thanks{Email: hema@cs.annauniv.edu}\\Department of Computer Science \& Engineering\\ Anna University, Chennai 600 025, INDIA.}
\maketitle
\begin{abstract}
In this paper, a new and novel data structure is proposed to dynamically insert and delete segments. Unlike the standard segment trees\cite{b1}, the proposed data structure permits insertion of a segment with interval range beyond the interval range of the existing tree, which is the interval between minimum and maximum values of the end points of all the segments. Moreover, the number of nodes in the proposed tree is lesser as compared to the dynamic version of the standard segment trees, and is able to answer both stabbing and range queries practically much faster compared to the standard segment trees. \\~\\
{\bf Keywords:}
Segment Trees, Stabbing Query, Threaded Binary Tree, Height Balancing.
\end{abstract}  
\section{Introduction}
Solving geometrical problems computationally seems to be difficult due to its complex nature such as varying dimensionality and shape. Several algorithms are available in literature\cite{bcko,ps} for storing and retrieving geometrical objects. Segment tree \cite{b1} is one such data structure designed to handle intervals on the real line, whose extremes belong to a fixed set of abscissas. The static segment tree requires $O(n\log n)$ space, and the stabbing query processing time is $O(k+\log n)$, where $n$ is the number of segments and $k$ is the number of output segments. Though it looks reasonable in space and query processing, it does not support dynamic insertion and deletion of segments. However, in the dynamic segment trees, one can insert or delete a segment in the existing segment tree, but storage space requirements are likely get increased as the range of the tree is not based on the segment end points. The range of the tree is predefined, and one can insert or delete the segments whose interval contained in this range, thus in those cases where the range of the segment falls beyond the range of the segment tree, reconstruction of the tree is the only possibility, which is not advisable. Also, this may practically increases the time as the number of nodes to be traversed get increased.  
\par
Variants of segment trees are used in packet classification problem. One such data structure is Fat Inverted Segment Trees (FIS-trees) \cite{fm,t} with the space requirements of $O(n^{1+\frac{1}{l}})$ and having complexity for insertion/deletion as $O(n^\frac{1}{l}\log n)$, where $l$ is the height of the tree. Here, the tree is compressed  (made `fat') by increasing the degree to more than two in order to decrease the height, however this has an upper bound on the total number of insertions and deletions allowed. Also, Agarwal et. al \cite{aa} proposed a linear-size data structure for the stabbing semi-group problem by combining the features of interval and segment tree, with the assumption that the end points of all intervals belong to a fixed set of points. 
\par
In this paper, a novel data structure, BITS-tree (\textbf{B}alanced \textbf{I}norder \textbf{T}hreaded \textbf{S}egment Tree),is proposed to dynamically insert and delete segments, with interval range of the node is only based on the end points of segments stored in the tree. In addition to this, it answers both stabbing and range queries efficiently. Unlike in the dynamic version of the standard segment trees, the root of the BITS-tree does not have the information of overall interval range of nodes in the tree, the BITS-tree also permits insertion of segment with any interval range, and thus it is a real dynamic tree that is suitable for all situations. 
\section{BITS-Tree}
\begin{definition}
A $BITS$-tree is a height balanced two-way inorder-threaded binary tree $T$ that satisfies the following properties.
\begin{enumerate}
\itemsep=0pt
\parskip=0pt
	\item Each node $v$ of $T$ associates with a range $r$ and a list of segments containing the range $r$.
	\item The range of a node cannot overlap with a range of any other node, other than at the end points. 
	\item The range of the nodes are sorted according to the inorder sequence. 
	\item It has a special node, called $dummy$, with range $\phi$ (empty), and its list is $\phi$. 
	\item The inorder predecessor of the first node of the inorder sequence is $dummy$. Similarly the inorder successor of the last node of the inorder sequence is $dummy$.
\end{enumerate}
\end{definition}
\par
For example, consider the $BITS$-tree given figure \ref{f1}.
\begin{figure}[!ht]
\begin{center}
\input{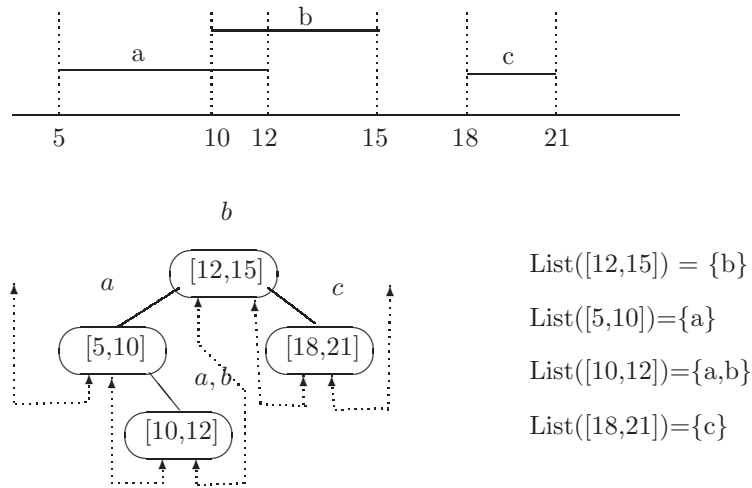}
\end{center}
\caption{A $BITS$-tree.}
\label{f1}
\end{figure}
The inorder sequence of range is  $[5,10]$, $[10,12]$, $[12,15]$ and $[18,21]$, which is sorted in ascending order, as this satisfies the binary search tree properties on ranges. So, searching for a particular range could be done as in the case of binary search tree, using the following definition. The doted lines in figure \ref{f1} denote threads. The hanging threads points to the $dummy$ node.
\begin{definition}
\label{d1}
Given two segments $S=[p,q]$ and $T=[m,n]$,  $S < T$ implies $q\le m$, that is the segment $S$ falls left of $T$. Similarly,  $S > T$ implies $n\le p$, and this means $S$ falls right of $T$.
\end{definition}
\begin{definition}
\label{d2}
Given two segments $S=[p,q]$ and $T=[m,n]$, where $p\neq q$ and $m\neq n$,  $p<m<q<n$ denotes $S$ {\em left overlaps} with $T$. Similarly, $m<p<n<q$ denotes $S$ {\em right overlaps} with $T$. Otherwise, $S$ is said to be {\em contained in} $T$ when $m<p<q\le n$ or $m\le p<q<n$, or {\em covered up} $T$ when $p<m<n\le q$ or $p\le m <n<q$.
\end{definition}
\par All four cases stated in definition \ref{d2} are commonly called as $S$ and $T$ {\em overlaps}. Note that, $S$ left overlaps with $T$ also implies $T$ right overlaps with $S$. Similarly, $S$ contained in $T$ also implies $T$ covered up $T$.
\begin{definition}
Given two overlapping segments $S=[p,q]$ and $T=[m,n]$, $S\cup T$ and $S \cap T$ are respectively defined as $[\min\{p,m\},\max \{q,n\}]$ and  $[\max\{p,m\},\min \{q,n\}]$.
\end{definition}
\begin{definition}
\label{d3}
Given two overlapping segments $S$ and $T$, $S\cup T$, say $[a,b]$, partitions into $L(S,T)$, $C(S,T)$ and $R(S,T)$. Here, $C(S,T) = S \cap T$, say $[p,q]$. Now, 
\begin{center}
$L(S,T)= \left \{ \begin{array}{lcl} [a,p]&~~& if~ a\neq p\\ \phi &&  Otherwise\end{array}\right .$\\
and ~~
$R(S,T)= \left \{ \begin{array}{lcl} [q,b]&~~& if~ q\neq b\\ \phi &&  Otherwise\end{array}\right .$
\end{center}
\end{definition}
\begin{definition}
\label{d4}
Let $S=[p,q]$ and $T=[m,n]$ be the two overlapping segments and their associated lists respectively be $list(S)$ and $list(T)$. Now, the list associated with $C(S,T)$, $L(S,T)$ and $R(S,T)$ are defined as follows.
\begin{eqnarray*}
list(C(S,T)) & = & list(S) \cup list (T)\\
list(L(S,T)) &= &\left \{ \begin{array}{lcl} list(S)&~~&if~p<m\\
                                             list(T)&&if~p>m\\
                                             \phi&&Otherwise
                           \end{array} \right .\\
list(R(S,T))&=&\left \{ \begin{array}{lcl} list(S)&~~&if~n<q\\
                                             list(T)&&if~n>q\\
                                             \phi&&Otherwise
                           \end{array} \right .
\end{eqnarray*}
\end{definition}
\subsection{Insertion}
For to insert a segment, say $S$, one has to search for a node, whose range overlaps with $S$, from root like in the binary search tree based on the relation given in definition \ref{d1}. As soon as, an overlapping range is found in any node $v$, the insertion process starts as follows. Let $R$ be the range of $v$. Now, $S \cup R$ is partitioned into sub-ranges as described in definition \ref{d3}. Now, range of $v$ will be changed as $C(S,R)$ and the segment $S$ will be included in the list of $v$. Further, insertion of $L(S,R)$ (if not empty) continues with the inorder predecessor of $v$, and insertion of $R(S,R)$ (if not empty) continues with the inorder successor of $v$. However, in these two cases, the list of segments to be inserted may change, as in definition \ref{d4}. Note that initially, the list of segments to inserted contains only $S$, and later may be changed according to definition \ref{d4}. At one point of time, the insertion may reach a null pointer (not always). If so, the balancing range (sub-range of segment not yet considered) will be considered as a range for a node, by creating a new one, with necessary thread pointers and having the list currently to be inserted. As the tree is balanced, creation of a new node may require necessary rotation to get the tree to be balanced. The rotations we performed here is only the AVL-tree based rotations \cite{avl}. The formal description of the insertion procedure is given in algorithm \ref{a1}.
\begin{algorithm}
\caption{BITS-Insert($T$,$S$,$L$)}\label{a1}
\KwIn{$T$ - the pointer to the root of the tree, $S$ - the segment to be inserted; and $L$ - Set of segments, initially $L=\{S\}$.}
\hspace{-3mm}$P\leftarrow \phi$;\\
$[p,q]\leftarrow S$;\\
\If{$T\neq NULL$ and $T\neq HEAD$}
{
  \Repeat{$C\neq \phi$ {\rm or} $T=NULL$}
  {
dd     \hspace{-3mm}$[m,n]\leftarrow RANGE(T)$;\\
     $C\leftarrow [m,n]\cap [p,q]$;\\
     $P\leftarrow T$;\\
     \If{$C = \phi$}
     {
        \hspace{-3mm}\lIf{$[p,q] <[m,n]$}{$T \leftarrow LCHILD(T)$;}\\
        \lElse{$T\leftarrow RCHILD(T)$;}
     } 
  }
  \If {$C \neq \phi$}
  {
    \hspace{-3mm}$RANGE(T) \leftarrow C$;\\
    $R \leftarrow LIST(T)$;\\
    $LIST(T) \leftarrow LIST(T)\cup \{[p,q]\}$;\\
    \If {$p\neq m$}
    {
       \hspace{-3mm}\lIf{$p<m$}{$R\leftarrow \{[p,q]\}$;}\\
       BITS-Insert($PRED(T),[\min \{p,m\},\max \{p,m\}],R$);
    }
    \If {$q\neq n$}
    {
       \hspace{-3mm}\lIf{$q>n$}{$R\leftarrow \{[p,q]\}$;}\\
       BITS-Insert($SUCC(T),[\min \{q,n\},\max \{q,n\}],R$);
    }
  }
}
\If{$T = NULL$ or $T=HEAD$}
{
   \hspace{-3mm}$N\leftarrow CreateNode()$;\\
   $RANGE(N)\leftarrow [p,q]$; \\
   $LIST(N) \leftarrow L$;\\
   \If{$P\neq \phi$}
   {
     \hspace{-3mm}\lIf{$RANGE(P) < RANGE(N)$}{$RCHILD(P)\leftarrow N$;}\\
     \lElse{$LCHILD(P) \leftarrow N$;}
   }
   Perform rotation if required;
}
\end{algorithm}
The CreateNode() function used in algorithm \ref{a1}, create a node as in threaded binary tree \cite{k,pt}, however here the treads are bidirectional.
\begin{figure}[!ht]
\begin{center}
\input{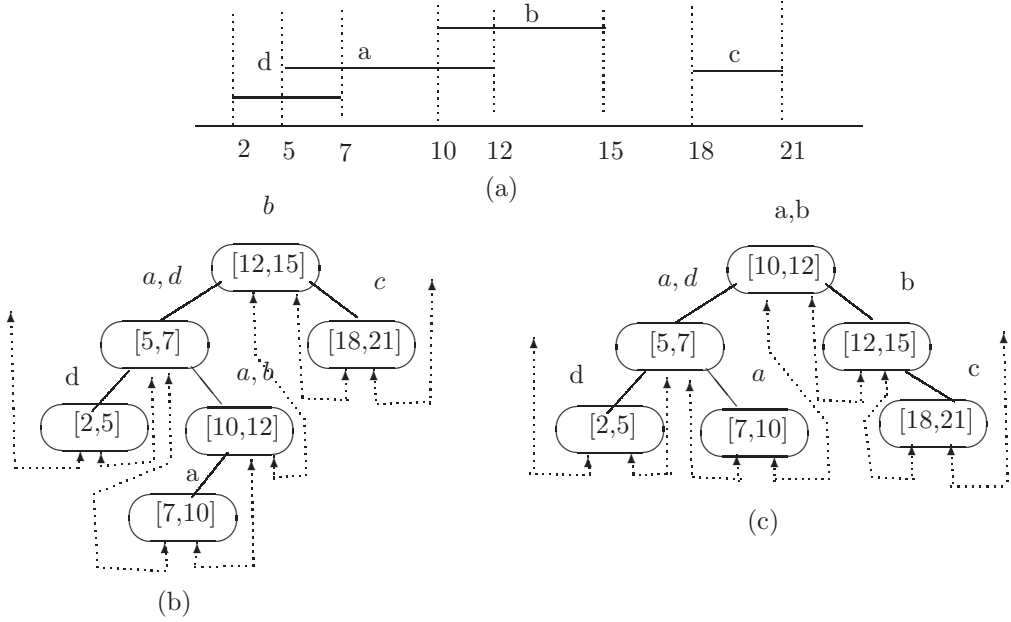}
\end{center}
\caption{Insertion: (a) Set of segments including new segment $d$. (b) Insertion of segment $d$ in BITS-tree in figure \ref{f1}, before rotation. (c) BITS-tree after rotation.}
\label{f2}
\end{figure}
\par
For example, consider insertion of segment $d=[2,7]$ in the BITS-tree given in figure \ref{f1}. The search starts with the root node, and stop in the node with range $r=[5,10]$ as this range overlaps with $d$, and $d \cup r$ results in $[2,5]\cup [5,7]\cup [7,10]$, as in definition \ref{d3}. Here, $L(d,r) = [2,5]$, $C(d,r)= [5,7]$ and $R(d,r)= [7,10]$. Now, the range $[5,7]$ is  retained at that node by including $d$ in the list of this node, and further insertion takes with respect to the ranges $[2,5]$ and $[7,10]$ at the inorder predecessor and successor, respectively. This process in turn creates two new nodes in the tree given in figure \ref{f1}, one with range $[2,5]$ and other with range $[7,10]$. This is shown in figure \ref{f2}(b). As creation of new node with range  $[2,5]$ does not affect the balancing factor, no rotation is required. However, creation of new node for the $[7,10]$ affects the balancing factor (figure \ref{f2}(b)), and hence necessary rotation to be carried out, here it is left-right rotation. The effect of this rotation is shown in figure \ref{f2}(c). Note here that the thread pointers are not affected due to rotation. Whichever be the rotation, the thread pointer cannot change, as the rotation does not affect the inorder sequence. Note here that the usage of threads obviously reduces the time for inserting a segment, as compared to the standard dynamic version of the segment tree. \par
The time required for inserting a node is $O(\log n+k)$, where the first factor $\log n$ stands for locating first overlapping node, and $k$ be the number of nodes in which the new segment get inserted.  Note that, a segment tree contains at most $2n-1$ nodes, where $n$ is the number of segments in the tree. 
\subsection{Deletion}  
Let $S=[p,q]$ be the segment to be deleted. First step of the deletion process is to locate a node $v$ with range $[p,-_q]$, where $-_q$ denotes a value less than or equal to $q$. If the list of the node $v$ does not contain the segment $S$, then deletion does not require, and the process of deletion can be terminated. Otherwise, the segment $S$ can be removed from the list of $v$. Now, after removal of $S$, if $LIST(v)$ is empty then convert the $RANGE(v)$ as $\phi$ (empty), and $v$ should be merged with $PRED(v)$. Note that, during merge the node appears at lower level will be joined at the node at the higher level, and also it performs necessary rotation if required to make the tree balanced. In case, the $LIST(v)$ is not empty after removal of $S$ then compare the $LIST(v)$ with the $LIST(PRED(v))$ and if found matching then merge the node $v$ with $PRED(v)$. Now, the deletion process can be continued with the $SUCC(v)$, and that to be continued till the node $w$ with range $[_p-,q]$, where $_p-$ denotes a value greater than equal to $p$. Then, to complete the deletion, compare the new range of $w$ with the $SUCC(w)$, and if found matching then $w$ and $SUCC(w)$ must be merged. Each time when a node is merged with the inorder predecessor or successor, necessary rotation need to be performed as in the case of AVL trees.
\par
\begin{algorithm}
\caption{BITS-Deletion}\label{a2}
\KwIn{T - the pointer to the root of the tree, and $[p,q]$ - the segment to be deleted.}
\Begin{
\While {$RANGE(T) \neq [p,-_q]$}
{
    \hspace{-3mm}\lIf {$RANGE(T) < [p,q]$}
      {$T \leftarrow RCHILD(T)$;\\}
    \lElse
      {$T \leftarrow LCHILD(T)$;}
}
\Repeat{$RANGE(T)=[_p-,q]$}
{\hspace{-3mm}$LIST(T) \leftarrow LIST(T)-\{[p,q]\}$;\\
\lIf{$LIST(T)=\phi$}{$RANGE\leftarrow \phi$;\\}
\If{$LIST(T)=LIST(PRED(T))$ {\rm or} $LIST(T)=\phi$}{merge(T,PRED(T));}
\lElse{$T \leftarrow SUCC(T)$;}
}
\lIf{$LIST(T)=LIST(SUCC(T))$}{merge(T,SUCC(T));}
}
\end{algorithm}
\begin{figure}[!ht]
\begin{center}
\input{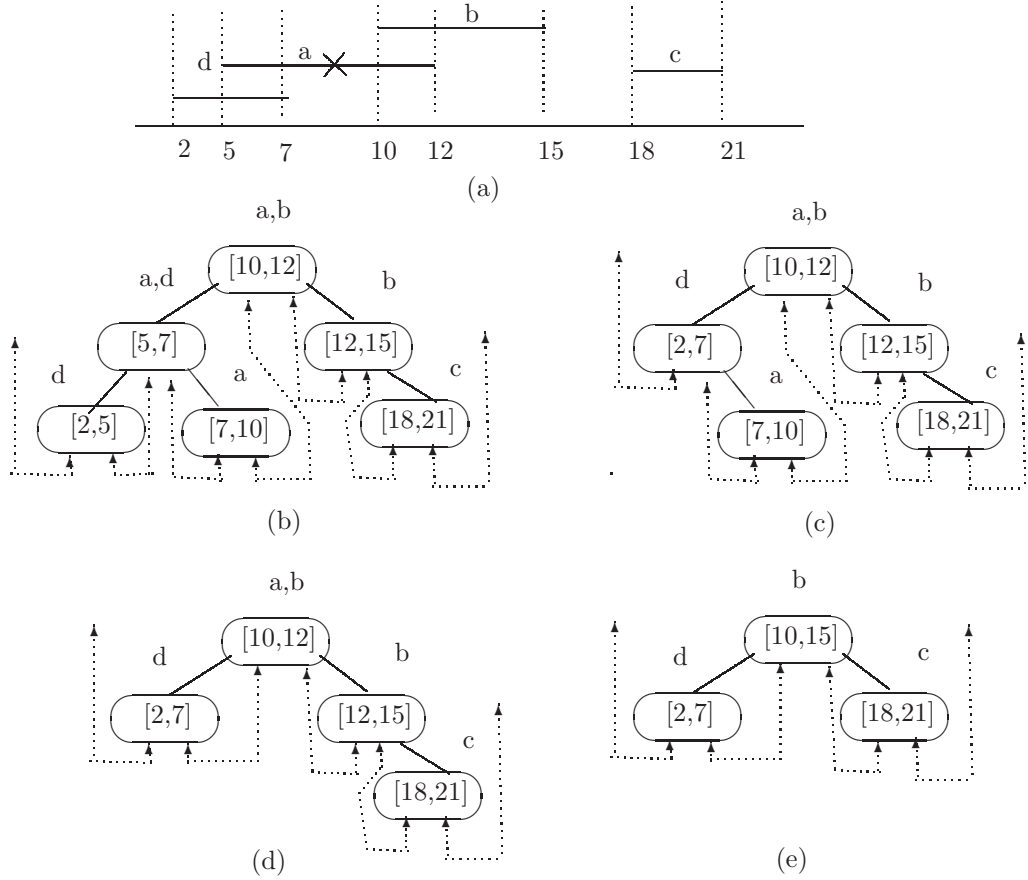}
\end{center}
\caption{Deletion: (a) Set of segments including the segment $a$ to be deleted. (b) Original $BITS$-tree. (c) After removal of segment $a$ from node having range [5,7]. (d) After removal of segment $a$ from node having range [7,10]. (e) After removal of segment $a$ from node having range [10,12].}
\label{f3}
\end{figure}
As an example, suppose segment $a=[5,12]$ is removed, as in figure \ref{f3}(a).  This process starts with node having range $[5,7]$. Now, after removal of segment $a$ from this node, the list of this node matches with its inorder predecessor, thus to be merged with its inorder predecessor and the result is shown in figure \ref{f3}(c). Next, the process continues with the node having range $[7,10]$. Now by removing segment $a$ from this node, its list becomes empty and so its range to be modified as empty($\phi$) and required merging is to be done with its inorder predecessor, and the result is shown in figure \ref{f3}(d). Finally, the segment $a$ must be removed from the list of the node having range $[10,12]$. Here, the list of this node does not match with its inorder predecessor, and hence merging with the predecessor is not required, rather as a last step of the deletion process, its range is to be compared with the inorder successor, and as it is matching it should be merged with its in order successor. The final result is shown in figure \ref{f3}(e).\
\par
Like insertion, deletion also takes $O(\log n + k)$ time, where $k$ is the number of nodes in which the segment to be deleted is present. 
\subsection{Query Processing} 
\subsubsection{Stabbing Query}
Given a point (one dimensional) $p$, it reports the segments containing $p$. On the other hand, it reports set of all segments intersects with the line $x=p$. In standard segment \footnote{trees}, the answer to this query lies in several nodes and one has to take union of lists of all those nodes to report the final answer. However, in BITS-tree, the answer lies in only one node. Here, the process is only to find the node whose range contains $p$, and the list of that node is the answer to the query. As the ranges of the nodes are not overlapped, a simple binary search will do for locating the range contains $p$. Since, BITS-tree is height balanced, $O(\log n)$ time is sufficient for locating a node with such range. Due to reduction in number of nodes and height as compared to the standard segment tree, our approach certainly outperforms considerably as compared to the existing one.
\subsubsection{Range Query}
Knowing set of segments overlaps with a given segment is an important problem in computational geometry\cite{bcko}. In BITS-tree, such queries can be answered very efficiently as compared to the standard segment trees. Here, given a segment $s=[p,q]$, finding set of overlapping segments are determined by first locating a node $u$ having range perfectly contains $p$, and traverse through the inorder successor till reaching the node $v$ with range perfectly contains $q$. Now the answer is the union of list of nodes traversed from $u$ to $v$. The time required for the same is $O(\log n +k)$, where $k$ is the number of nodes containing answer to the query.
\section{Comparison}
The static version of the standard segment tree of the segments, given in figure \ref{f1}, is shown in figure \ref{f4}. Similarly, the best dynamic version of the standard segment tree for the same set of segments is shown in figure \ref{f5}. For this particular example, the comparison is shown in table \ref{t1}. 
\begin{figure}[!ht]
\begin{center}
\tiny
\input{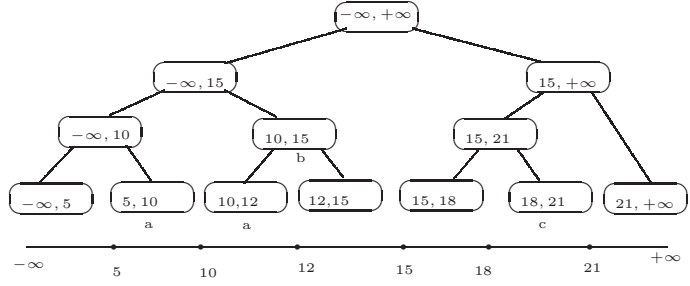}
\normalsize
\end{center}
\caption{Static version of the standard segment tree for the segments shown in figure \ref{f1}.}
\label{f4}
\end{figure}
\begin{figure}[!ht]
\begin{center}
\tiny
\input{fig2.pic}
\normalsize
\end{center}
\caption{Dynamic version of the standard segment tree for the segments shown in figure \ref{f1}.}
\label{f5}
\end{figure}
\begin{table}[!ht]
\caption{Comparison of trees given in figures \ref{f1}, \ref{f4} and \ref{f5}. SST denotes Static Segment Tree, DST denotes Dynamic Segment Tree and BITS denotes BITS Tree.}
\label{t1}
\vspace{3mm}
\begin{center}
\begin{tabular}{|l|c|c|c|}
\hline
Description & SST & DST & BITS\\
\hline
Number of nodes &13&31&4\\
\hline
Cumulative list size &4&8&5\\
(for segment storage)&&&\\
\hline
Height of the tree &3&4&2\\
\hline
Maximum stabbing&7&9&4\\
query time&&&\\
\hline
\end{tabular}
\end{center}
\end{table}
The maximum stabbing query time for the static segment tree comes when the query point is 15, and in the dynamic segment tree the maximum query time comes when the query point is 13. However, this analysis is not sufficient, and thus the factors related to the theoretical analysis  with respect to the number of segments as $n$ is shown in table \ref{t2}, which are due to the following lemmas.
\par
For the following lemmas, let us assume SST and DST respectively represent the static and dynamic versions of the standard segment trees. Also, the terms trees and BITS trees are used interchangeably in rest of the paper.
\begin{lemma}
\label{l1}
The maximum number of nodes required for the BITS tree is $2n-1$, where $n$ is the number of segments.
\end{lemma}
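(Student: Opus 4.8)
The plan is to bound the number of distinct ranges that can ever appear in the tree by counting the endpoint values available to form them. The structural fact I would establish first is that every endpoint of every node range is an endpoint of one of the inserted segments. I would prove this by induction on the insertion procedure: the first segment $[p,q]$ creates a single node whose range endpoints $p,q$ are segment endpoints, and at each subsequent insertion the only ranges produced are $C(S,R)$, $L(S,R)$ and $R(S,R)$ of Definition \ref{d3}. Since $C(S,R)=S\cap R=[\max\{p,m\},\min\{q,n\}]$ and the endpoints of $L(S,R)$ and $R(S,R)$ are drawn from $\{\min\{p,m\},\max\{p,m\}\}$ and $\{\min\{q,n\},\max\{q,n\}\}$, every endpoint produced is one of $p,q,m,n$, each of which is a segment endpoint by the induction hypothesis. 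Thus no new endpoint value is ever introduced, even though the residual subrange may be pushed across several nodes via the recursive calls on $PRED$ and $SUCC$.

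Next I would use the defining properties of a $BITS$-tree (ranges are pairwise non-overlapping except at shared endpoints, and sorted in inorder) to conclude that the ranges tile the covered portion of the line as a sorted sequence of intervals. I would then group them into maximal contiguous runs, where consecutive ranges within a run abut at a common endpoint. A run consisting of $k$ nodes uses exactly $k+1$ distinct endpoint values, and two different runs cannot share an endpoint, since otherwise their adjacent ranges would abut and the two runs would in fact be one. Hence, writing $K$ for the total number of data nodes and $r\ge 1$ for the number of runs, the ranges use $\sum_i (k_i+1)=K+r$ distinct endpoint values altogether.

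Combining the two observations yields the bound. Because $n$ segments contribute at most $2n$ distinct endpoint values and, by the first step, every range endpoint is one of these, we obtain $K+r\le 2n$, so $K\le 2n-r\le 2n-1$ whenever the tree is non-empty (so that $r\ge 1$); the single sentinel $dummy$ node of the definition stores no segment and is excluded from this count. For tightness I would exhibit the nested family $[1,2n],[2,2n-1],\dots,[n,n+1]$, whose $2n$ endpoints are the distinct integers $1,\dots,2n$. A short computation shows the number of these segments covering the elementary interval $[j,j+1]$ equals $\min\{j,2n-j\}$, which differs for every pair of adjacent intervals; hence no two elementary intervals can merge into one node and the tree contains exactly $2n-1$ nodes, meeting the upper bound.

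I expect the main obstacle to be the bookkeeping in the first step, namely verifying rigorously that the list-and-range rewriting of Definition \ref{d4}, together with the recursive descent on $PRED$ and $SUCC$, never manufactures an endpoint that is not already a segment endpoint, since a single insertion can propagate a residual subrange through several existing nodes before it either finds another overlap or reaches a null pointer. Once that invariant is secured, the contiguous-run counting and the tightness construction are routine.
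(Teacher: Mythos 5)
Your proof is correct and takes essentially the same approach as the paper's: both rest on the invariant that every node-range endpoint is a segment endpoint together with the non-overlapping property, and then count against the at most $2n$ endpoints (the paper phrases the count as the $n$ segments partitioning $[-\infty,+\infty]$ into $2n+1$ sub-ranges and discarding the two unbounded ones, which is your run/endpoint bookkeeping in a coarser form). Your two additions --- the induction over the insertion procedure to justify the endpoint invariant, which the paper merely asserts in one sentence, and the nested family $[1,2n],[2,2n-1],\dots,[n,n+1]$ showing the bound $2n-1$ is actually attained --- make the argument more rigorous but do not change the underlying approach.
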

\begin{proof}
The range of nodes in BITS trees are due to the end points of the segments, and these ranges are not overlapped with the ranges of other nodes in the tree. As $n$ segments partition the range $[-\infty,+\infty]$ into $2n+1$ sub-ranges, the BITS tree can have the maximum of $2n-1$ nodes, since the BITS tree cannot have the nodes with range either $[-\infty,x]$ or $[y,+\infty]$, where $x$ and $y$ are respectively $\min _{1\le i\le n} \{x_i|s_i =[x_i,y_i]\}$ and $\max _{1\le i\le n} \{y_i|s_i =[x_i,y_i]\}$ for the set of segments $\{s_1,s_2, \cdots s_n\}$.
\end{proof}
\begin{table}[!ht]
\caption{Theoretical comparison of static segment trees, dynamic segment trees and the proposed BITS trees. SST denotes Static Segment Trees, DST denotes Dynamic Segment Trees and BITS denotes BITS Trees.}
\label{t2}
\vspace{3mm}
\footnotesize
\begin{center}
\begin{tabular}{|l|c|c|c|}
\hline
Description & SST & DST & BITS\\
\hline
Maximum&&&\\
number of nodes &4n+1&$\ge 2(n_2-n_1)-1$&2n-1\\
\hline
Maximum &&&\\
cumulative list &$2n\lceil\log (2n+1)\rceil-1$&$2n\lceil\log (n_2-n_1)\rceil-1$&$n^2$\\
size (for segment &&&\\
storage)&&&\\
\hline
Maximum height&&&\\
of the tree &$\lceil\log (2n+1)\rceil$&$\lceil\log (n_2-n_1)\rceil$&$1.441\lceil \log n\rceil$\\
\hline
Maximum stabbing&$2\lceil\log (2n+1)\rceil-1+k$&$2\lceil\log (n_2-n_1)\rceil-1+k$&$1.441\lceil \log n\rceil+1+k$\\
query time&&&\\
\hline
\end{tabular}
\end{center}
\normalsize
\end{table}
\begin{lemma}
Maximum storage required for maintaining the lists on all the nodes together in BITS tree is $n^2$.
\end{lemma}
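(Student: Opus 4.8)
The plan is to reduce the total cumulative list size to a counting problem over the elementary intervals (the node ranges) and then bound it by double counting over pairs of segments. First I would observe that, by the insertion semantics (Definitions \ref{d3} and \ref{d4}), a segment $s_i=[x_i,y_i]$ ends up in the list of a node exactly when that node's range is contained in $[x_i,y_i]$; since the node ranges are precisely the elementary intervals cut out by the segment endpoints, the number of nodes holding $s_i$ equals the number of such elementary intervals lying inside $[x_i,y_i]$. If $r_i$ denotes the number of distinct endpoint values strictly interior to $[x_i,y_i]$, this count is exactly $c_i=r_i+1$ (the $r_i$ interior breakpoints split $[x_i,y_i]$ into $r_i+1$ elementary intervals, each a genuine node because $s_i$ itself covers it). Summing, the total cumulative list size is $T=\sum_{i=1}^n c_i = n+\sum_{i=1}^n r_i$, so it suffices to prove $\sum_i r_i\le n(n-1)$.

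Next I would bound $\sum_i r_i$ through segment--endpoint incidences. Each distinct value counted in $r_i$ is an endpoint of some other segment, so $r_i\le \sum_{j\neq i} a_{ij}$, where $a_{ij}\in\{0,1,2\}$ is the number of endpoints of $s_j$ lying strictly inside $s_i$ (passing from distinct values to endpoint occurrences can only increase the count, which is harmless here). The heart of the argument is the pairwise inequality $a_{ij}+a_{ji}\le 2$ for every unordered pair $\{i,j\}$, which I would establish by the case analysis afforded by Definition \ref{d2}: if $s_i$ and $s_j$ are disjoint or meet only at an endpoint, then $a_{ij}=a_{ji}=0$; if one is contained in the other, the inner segment contributes at most $2$ and the outer contributes $0$; and if they overlap properly (left or right overlap), each supplies exactly one endpoint to the interior of the other, giving $1+1$. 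Summing over the $\binom n2$ unordered pairs then yields $\sum_{i\neq j} a_{ij}\le 2\binom n2 = n(n-1)$, whence $T\le n+n(n-1)=n^2$.

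Finally, to confirm the bound is attained I would exhibit the nested family $s_k=[k,\,2n+1-k]$ for $k=1,\dots,n$: every pair falls in the containment case, the endpoints are $1,2,\dots,2n$, and $s_k$ occupies $2n-2k+1$ elementary intervals, so $\sum_{k=1}^n(2n-2k+1)=1+3+\cdots+(2n-1)=n^2$. Hence the maximum cumulative list size is exactly $n^2$.

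I expect the main obstacle to be the case analysis establishing $a_{ij}+a_{ji}\le 2$ cleanly, in particular handling coincident endpoints and the strict-versus-nonstrict boundary so that a shared endpoint is never charged to both segments. The safeguard is precisely the reduction from distinct endpoint values to occurrences: it never increases $r_i$, so the incidence bound remains valid even in degenerate configurations where several segments share an endpoint.
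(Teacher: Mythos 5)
Your proof is correct, and it is genuinely different from --- and substantially more complete than --- the one in the paper. The paper's proof is a single assertion: the worst case ``arises only when'' one node's list holds all $n$ segments and the nodes flanking it on either side hold $n-1, n-2, \ldots, 1$ segments, giving $n + 2\bigl((n-1)+(n-2)+\cdots+1\bigr) = n^2$. That is, the paper exhibits the extremal (nested) configuration and computes its cost, but never argues that no other arrangement of $n$ segments can exceed $n^2$; maximality is claimed, not proved. Your argument fills exactly that gap: the identity $T = n + \sum_i r_i$ reduces the cumulative list size to counting endpoints interior to each segment, and the pairwise inequality $a_{ij}+a_{ji}\le 2$, checked against the cases of Definition \ref{d2} (including shared-endpoint degeneracies), yields the unconditional bound $T \le n + 2\binom{n}{2} = n^2$ by double counting over pairs. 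Your nested family $s_k=[k,\,2n+1-k]$ is then precisely the configuration the paper describes --- an innermost node with $n$ segments flanked by lists of sizes $n-1,\ldots,1$ on each side --- so the two proofs coincide on the tightness example. What the paper's route buys is brevity; what yours buys is an actual proof that $n^2$ is a maximum, plus the structural observation that the final node ranges are exactly the elementary intervals of the endpoint arrangement within the union of the segments. That observation is the one place where you assert rather than prove: it deserves the short justification that a node range can never contain a segment endpoint in its interior (any range properly overlapping an inserted segment gets split at insertion, and newly created ranges lie inside the inserted segment and outside all existing ranges), since the identity $c_i=r_i+1$ rests on it.
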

\begin{proof}
The maximum storage required for maintaining the lists arises only when one of the node's (say, $N$) list with $n$ segments and two nodes, other than $N$, each with $n-1$, $n-2$, $\cdots$, 1 segments.
Thus, a total of $n+2((n-1)+(n-2)+\cdots+1)$, which is $n^2$, segments (may be with replication across the lists) stored altogether in the BITS tree. 
\end{proof}
\begin{lemma}
The height of the BITS trees does not exceed $\lceil \log n\rceil$.
\end{lemma}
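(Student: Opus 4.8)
The plan is to combine the node-count bound of Lemma~\ref{l1} with the defining balance property of the tree. A BITS-tree is maintained as a height-balanced search tree---rotations are applied after every insertion and deletion precisely to preserve this invariant---so its height is controlled entirely by the number of nodes it holds, and Lemma~\ref{l1} already caps that number at $2n-1$. The whole argument therefore reduces to bounding the height of a balanced binary tree in terms of its node count $m$ and then substituting $m \le 2n-1$.

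First I would record the structural fact that a complete (perfectly balanced) binary tree on $m$ nodes has height exactly $\lceil \log_2(m+1)\rceil - 1$, which follows because a binary tree of height $h$ contains at most $2^{h+1}-1$ nodes. Granting that the balancing discipline keeps a BITS-tree complete, its height is at most this quantity; substituting the worst-case node count $m = 2n-1$ from Lemma~\ref{l1}, and using $m+1 = 2n$, gives a height of at most $\lceil \log_2(2n)\rceil - 1 = \lceil \log_2 n\rceil$, exactly the claimed bound. The only simplification needed is the identity $\lceil \log_2(2n)\rceil = 1 + \lceil \log_2 n\rceil$, valid because doubling the argument shifts the base-two logarithm by exactly one; equivalently one may verify $\lfloor \log_2(2n-1)\rfloor = \lceil \log_2 n\rceil$ by a one-line split on whether $n$ is a power of two.

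The delicate point---and the step I expect to be the real obstacle---is the passage from ``height-balanced'' to ``perfectly balanced.'' The AVL invariant only guarantees that sibling subtree heights differ by at most one, which by the usual Fibonacci-style counting gives a height of at most roughly $1.44\log_2 m$, matching the $1.441\lceil\log n\rceil$ entry of Table~\ref{t2} rather than the sharper $\lceil\log n\rceil$ asserted here. To land on $\lceil\log n\rceil$ one must argue that the particular rotations performed during BITS insertion and deletion keep the tree complete on its current node set, not merely AVL-balanced. Absent such a strengthened invariant the honest worst-case bound is the weaker $1.441\lceil\log n\rceil$, so I would either prove the stronger completeness claim directly or read this lemma as stating the best-case minimum height $\lceil\log n\rceil$ attained by a perfectly balanced BITS-tree on $2n-1$ nodes.
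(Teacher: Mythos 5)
Your instinct about the delicate point is exactly right, and it is in fact where the paper itself goes wrong: the paper's entire proof of this lemma reads, ``The proof follows from the fact that the height of the AVL tree is less than $1.441\lceil \log n\rceil$.'' That is, the paper cites the Fibonacci-type AVL height bound as though it established the stated $\lceil \log n\rceil$ bound --- precisely the illegitimate passage from ``height-balanced'' to ``perfectly balanced'' that you declined to make. Your fallback reading is the correct one: the bound actually obtainable from the AVL invariant is the $1.441$ one, and the paper's own Table \ref{t2} records the maximum height of BITS trees as $1.441\lceil \log n\rceil$ (and reuses that figure in its stabbing-query row), so the lemma statement is overstated relative to what its proof delivers. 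Note also that the strengthened completeness invariant you would need is not merely unproven but false: AVL insertions and deletions can leave Fibonacci-shaped trees, which on $m$ nodes have height strictly larger than $\lceil \log (m+1)\rceil-1$, and nothing in the BITS insertion or deletion procedures does more than AVL rebalancing.

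Two smaller remarks. Your perfect-balance arithmetic is fine: a complete tree on $m\le 2n-1$ nodes (Lemma \ref{l1}) has height $\lceil \log (m+1)\rceil-1\le\lceil \log (2n)\rceil-1=\lceil \log n\rceil$, which correctly identifies $\lceil \log n\rceil$ as the best-case height --- the only reading under which the lemma as written is true. And if one settles for the AVL bound, it should be applied to the node count rather than the segment count: with up to $2n-1$ nodes the honest statement is roughly $1.441\log (2n)\approx 1.441(\log n+1)$, an additive constant worse than the $1.441\lceil \log n\rceil$ quoted in both the paper's proof and its table. Your habit of tracking the node count $m$ separately and substituting $m\le 2n-1$ only at the end is the cleaner way to phrase whichever bound one ends up proving.
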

\begin{proof}
The proof follows from the fact that the height of the AVL tree is less than ${1.441}\lceil \log n\rceil$ \cite{avl}.
\end{proof}
\begin{lemma}
Stabbing query on BITS tree can be addressed within $\lceil \log n\rceil+1+k$ time, where $k$ is the number of output segments.
\end{lemma}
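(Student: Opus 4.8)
The plan is to decompose the stabbing-query cost into two phases: (i) locating the unique node whose range contains the query point $p$, and (ii) reporting the segments stored in that node. First I would invoke properties~2 and~3 of the BITS-tree definition: since the ranges of distinct nodes do not overlap except possibly at their endpoints, and are sorted according to the inorder sequence, the tree acts as a binary search tree on ranges. Consequently, for a given point $p$ there is essentially a single node whose range contains $p$, and this node can be found by the same root-to-node descent used in ordinary binary search, branching left or right according to the relation in Definition~\ref{d1} (treating the query point $p$ as the degenerate segment $[p,p]$).

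Next I would bound the cost of the descent. The search visits a sequence of nodes along one root-to-node path, performing a single range comparison per node; the number of such nodes is at most $h+1$, where $h$ is the height of the tree. Appealing to the preceding lemma, which bounds the height of a BITS tree, the locating phase therefore costs at most $\lceil \log n \rceil + 1$ comparisons, the additive constant accounting for the terminal node on the path.

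Finally, I would handle the reporting phase. Because the ranges do not overlap, every segment that contains $p$ must appear in the list of the single located node, so that one list is exactly the query answer; unlike the standard segment tree, no union over several nodes along the search path is required. Scanning and emitting this list costs $k$ steps, where $k$ is the number of output segments. Summing the two phases yields the claimed bound $\lceil \log n \rceil + 1 + k$.

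The main obstacle I anticipate is reconciling the height bound invoked here with the one actually established earlier: the preceding proof really shows only that the AVL-based height is below $1.441\lceil \log n \rceil$, so to obtain the sharper $\lceil \log n \rceil$ leading term one must either strengthen the height lemma or accept the $1.441$ factor as recorded in the comparison table~\ref{t2}. A secondary point requiring care is the exact accounting of the additive constant, namely justifying that the path length together with the final containment test contributes precisely the ``$+1$'' rather than a larger $O(1)$ term; this amounts to confirming that each node on the descent path is charged exactly one comparison and that locating the containing node needs no extra probe beyond the path itself.
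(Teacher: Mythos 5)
Your two-phase decomposition (descend to the containing node, then report its list) matches the paper's proof, and your worry about the height constant is legitimate: the paper's height lemma, as proved, only yields the AVL bound $1.441\lceil\log n\rceil$, which is also what Table~\ref{t2} records, so the leading term $\lceil\log n\rceil$ is not actually justified by the paper either. However, there is a genuine gap in your reporting phase, and it sits exactly where the paper's ``$+1$'' comes from. You claim that, because ranges do not overlap, every segment containing $p$ must appear in the list of the single located node. This fails when $p$ is a shared endpoint of two adjacent ranges, which property~2 of the BITS-tree definition explicitly permits (ranges may meet ``at the end points''). Concretely, in the tree of Figure~\ref{f1}, take $p=12$: the segment $a=[5,12]$ contains $p$ but is stored only in the lists of the nodes $[5,10]$ and $[10,12]$, so if the binary-search descent terminates at the node with range $[12,15]$, its list $\{b\}$ is an incomplete answer. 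The paper's proof handles precisely this case: when the query point coincides with an endpoint, one must additionally visit the inorder predecessor or successor, which the two-way threads make reachable in constant time, and this single extra hop is the source of the additive $1$ in $\lceil\log n\rceil+1+k$. Your alternative accounting, which charges the $+1$ to the terminal node of the search path, is therefore a misattribution; under your reading the boundary case would produce a wrong answer, not merely a slower one.

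The repair is small: replace the assertion that the located node's list is exactly the answer by the statement that the answer is the union of that list with the list of at most one inorder neighbour, the neighbour being consulted only when $p$ equals an endpoint of the located range; the thread pointer gives that neighbour in $O(1)$ time, yielding $\lceil\log n\rceil+1+k$ (modulo the $1.441$ factor inherited from the height lemma, a defect shared with the paper's own statement).
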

\begin{proof}
As the height of the tree is $\lceil \log n\rceil$, to locate a node whose range contains the query point requires $O(\lceil \log n\rceil)$ time. Once such a node is found, the segments exists in the list of that node becomes the answer, and thus, we add $\theta(k)$ time to output the segments. However, if the query point is exactly one of end points of a segment, it is required to access either the inorder successor or the inorder predecessor of the node to list out the segments. In this case, it requires an additional $\theta(1)$ time to visit that node using the threads. Thus, the stabbing query can be addressed within the time of $\lceil \log n\rceil+1+k$.
\end{proof}
\begin{lemma}
Let $[n_1,n_2]$ be the range of root of a DST $T$, and $n$ be the number of segments stored in $T$. Then, $n \le n_2-n_1$.
\end{lemma}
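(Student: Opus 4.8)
My plan is to prove the inequality by a counting/injection argument based on the rigid structure of a DST. Recall that, unlike the BITS tree, a DST is built over a \emph{predefined} range $[n_1,n_2]$ that is subdivided at the minimum (unit) resolution into elementary intervals, which form its leaves. Since the range is $[n_1,n_2]$ and each elementary interval has unit length, there are exactly $n_2-n_1$ such elementary intervals --- consistent with the $n_2-n_1$ leaf count already used for the DST column in Table~\ref{t2}. The idea is then to injectively associate to each of the $n$ stored segments one of these $n_2-n_1$ elementary intervals, so that the desired bound $n\le n_2-n_1$ follows at once from $|\{\text{segments}\}|\le|\{\text{elementary intervals}\}|$.

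\textbf{Steps.} First I would record the admissibility facts that a DST enforces on its contents: every inserted segment $s_i=[x_i,y_i]$ must be contained in the predefined range (otherwise the tree would require rebuilding, as discussed in the Introduction), so $n_1\le x_i<y_i\le n_2$; and by Definition~\ref{d2} each segment is non-degenerate ($x_i\ne y_i$), hence at unit resolution has length at least $1$ and a well-defined leftmost elementary interval $[x_i,x_i+1]$ with $x_i\in\{n_1,\dots,n_2-1\}$. Second, I would define the map sending $s_i$ to this leftmost elementary interval (equivalently, to its left endpoint $x_i$), whose codomain has exactly $n_2-n_1$ elements. Third, and this is where the real content lies, I would have to verify that the map can be taken injective, after which the cardinality comparison immediately closes the argument.

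\textbf{Main obstacle.} The injectivity step is precisely where I expect the difficulty to concentrate, and it is genuinely delicate: two distinct stored segments may share a left endpoint (for instance $[x,y]$ and $[x,y']$ with $y\ne y'$), or be nested, so neither the left-endpoint nor the right-endpoint map is injective on its own; indeed a short range can host many more than $n_2-n_1$ distinct unit-resolution segments. Consequently the bound cannot follow from raw segment cardinality alone, and the argument must exploit some further property special to the \emph{best} DST under comparison --- most naturally that it is the minimal one, with $n_1=\min_i x_i$ and $n_2=\max_i y_i$, together with a structural hypothesis ensuring the stored segments have distinct left endpoints, so that the leftmost-interval map becomes injective. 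Isolating and justifying exactly that hypothesis is, in my view, the crux of the lemma; once it is in place, the counting step above is routine.
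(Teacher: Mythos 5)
Your proposal is incomplete at exactly the step you flag, and the gap cannot be closed: no injection from the stored segments into the $n_2-n_1$ unit elementary intervals exists in general, because nothing in the definition of a DST forces distinct left endpoints or any form of disjointness. Worse, the obstacle you identify is not an artifact of your particular route but a refutation of the lemma as stated. Take root range $[0,3]$ and store the six distinct segments $[0,1]$, $[1,2]$, $[2,3]$, $[0,2]$, $[1,3]$, $[0,3]$: each has endpoints on the unit grid and length at least $1$, all are contained in the root range, and the configuration is even minimal in your sense ($n_1=\min_i x_i$ and $n_2=\max_i y_i$), yet $n=6>3=n_2-n_1$. So the auxiliary hypothesis you were hunting for (distinct left endpoints, or pairwise interior-disjointness, under which a total-length argument gives the bound immediately) is genuinely necessary and is not supplied by the DST structure; overlapping and nested segments are precisely what segment trees exist to store (cf.\ Definition~\ref{d2}, and note that segments $a$ and $b$ in figure~\ref{f1} already overlap).

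For comparison, the paper's own proof is the same unit-leaf counting idea, and it silently makes exactly the leap you declined to make: after observing that $T$ has $n_2-n_1$ leaves of unit length, it asserts that $n>n_2-n_1$ would force some segment of length less than $1$. That pigeonhole step is valid only if distinct segments occupy interior-disjoint portions of $[n_1,n_2]$ --- precisely the injectivity you could not establish --- and the six-segment example above defeats it. So your diagnosis of where the content lies is correct, and indeed sharper than the paper's treatment; but the honest conclusion is that your proposal does not prove the lemma, and no proof exists without either adding a disjointness hypothesis or restating the claim, e.g.\ as a bound on the number of \emph{distinct segment endpoints}, which is at most $n_2-n_1+1$ since endpoints lie on the unit grid inside $[n_1,n_2]$; a repaired statement along those lines would still support the comparison with the $\ge 2(n_2-n_1)-1$ node count in Table~\ref{t2}.
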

\begin{proof}
Let $k$ be the number of leaf nodes in $T$. Let the range of $i$-th leaf be $[a_i,b_i]$, where  $1\le i\le k$. It is clear that the range $[a_i,b_i]$ of the leaf in DST satisfies $b_i-a_i=1$. Also, the number of leaf nodes in $T$ is $n_2-n_1$, which is the disjoint union of ranges of unit length across the range of the root. Thus, if $n > n_2-n_1$, then there should at least one segment, say $s_i=[x_1,x_2]$, exists by satisfying $x_2-x_1 <1$, which is impossible. Thus, $n \le n_2-n_1$.
\end{proof}
\par
\begin{lemma}
BITS trees outperforms SST and DST in case of range search.
\end{lemma}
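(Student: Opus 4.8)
The plan is to compare the three structures by bounding, for each, the number of nodes that a range query on a segment $s=[p,q]$ must visit together with the amount of redundant work each incurs, and then to combine these counts with the height bounds already established in the preceding lemmas. Since all three answer the query in an $O(\log n + k)$-flavoured time, the comparison will not be an asymptotic separation but a comparison of the governing quantities: the depth of the endpoint searches, the number of nodes traversed, and whether deduplication of reported segments is necessary.

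First I would recall the BITS-tree procedure exactly as stated in the range-query subsection: a single binary search locates the node $u$ whose range contains $p$, which by the established height bound of $1.441\lceil\log n\rceil$ costs $O(\log n)$, after which the bidirectional inorder threads are followed from $u$ to the node $v$ containing $q$, visiting each of the $k$ intermediate nodes exactly once. Because the ranges of BITS-tree nodes are pairwise disjoint except at their endpoints (property 2 of the definition of a BITS-tree), no segment can appear in two distinct visited lists, so the answer is the disjoint union of the $k$ lists and is assembled with no deduplication. Invoking Lemma \ref{l1}, the total number of nodes is at most $2n-1$, so $k \le 2n-1$ and the whole query runs in $O(\log n + k)$ with small constants.

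Next I would contrast this with SST and DST. In both standard segment trees a single segment is stored across several canonical nodes lying on distinct root-to-leaf paths, so reporting every segment overlapping $[p,q]$ forces a descent along the search paths of both endpoints and an inspection of all nodes whose elementary intervals meet $[p,q]$; the same segment then surfaces in several of these nodes and must be filtered out, adding work beyond the output size. I would then read off the node-count and height entries from the comparison table: SST has $4n+1$ nodes and height $\lceil\log(2n+1)\rceil$, while DST has at least $2(n_2-n_1)-1$ nodes and height $\lceil\log(n_2-n_1)\rceil$, and by the earlier lemma $n \le n_2-n_1$. Hence both the depth of the endpoint searches and the number of nodes traversed strictly exceed the corresponding BITS-tree quantities $1.441\lceil\log n\rceil$ and $2n-1$, while the BITS-tree additionally avoids all deduplication. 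This establishes that the BITS-tree performs no worse, and in general strictly better, on range search.

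The hard part will be making the word \emph{outperforms} precise, since a clean asymptotic gap is unavailable when each method admits an $O(\log n + k)$ bound in favourable cases; I would therefore phrase the conclusion as a per-quantity domination (height, number of visited nodes, and presence versus absence of deduplication), each of which is smaller for the BITS-tree. A secondary subtlety, which I would settle first, is that standard segment trees are engineered for stabbing rather than for reporting all overlapping segments, so I must fix the natural range-reporting procedure on SST and DST before the counting comparison is even meaningful.
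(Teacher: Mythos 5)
Your proposal contains two concrete errors and, more importantly, misses the paper's actual line of argument. First, the claim that property 2 (pairwise disjoint node ranges) implies ``no segment can appear in two distinct visited lists'' is false: disjointness applies to the \emph{ranges}, not the \emph{segments}, and a segment spanning several ranges is replicated in the list of every node whose range it covers --- see figure \ref{f1}, where segment $a$ appears in the lists of both $[5,10]$ and $[10,12]$, and the $n^2$ cumulative-list-size lemma, whose proof explicitly says ``may be with replication across the lists.'' The paper's own description of the range query says the answer is the \emph{union} of the traversed lists, so deduplication is needed in the BITS tree too; the advantage you built the comparison around does not exist. Second, your per-quantity domination fails on height: by table \ref{t2} the BITS bound is $1.441\lceil\log n\rceil$ while SST's is $\lceil\log(2n+1)\rceil\approx\log n+1$, so for all but small $n$ the BITS height bound is the \emph{larger} of the two, and your assertion that the SST/DST search depths ``strictly exceed'' the BITS quantity is wrong.

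The paper argues along a different and cruder axis, and your concession of an endpoint-descent procedure to SST/DST actually undercuts the lemma. The paper's proof claims that the inorder-walk strategy is simply \emph{unavailable} on SST/DST, for two structural reasons: those trees have no inorder threads, and the range of a node overlaps the ranges of its ancestors, so consecutive-in-inorder nodes do not partition the query range. From this it concludes that range search on SST/DST degenerates to a generic binary-tree traversal costing $O(m)$, where $m$ is the total number of nodes, whereas the BITS tree visits $O(\log n + p)$ nodes with $\log n + p \le m$. Whether or not one finds that dismissal of cleverer SST/DST procedures fully convincing, it is the separation the lemma rests on; by instead granting SST/DST an $O(\log n + \cdot)$-style procedure, you reduce the claim to a comparison of constants and node counts that, as noted above, the paper's own table does not uniformly support. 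To repair your argument you would need either to adopt the paper's impossibility reasoning for the inorder strategy on SST/DST, or to prove a genuine lower bound on your proposed SST/DST procedure (e.g., that the number of canonical nodes meeting $[p,q]$ can be $\Theta(m)$), neither of which your draft does.
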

\begin{proof}
Given a range, say $r=[x_1,x_2]$, the range search has to find the number of segments overlapped with $r$. In BITS tree, this is done by locating the node $N_1$, whose range contains $x_1$, then by visiting the nodes using inorder successor threads till reaching the node $N_2$ with the range containing $x_2$. Here, the segments stored in the list of nodes between $N_1$ and $N_2$ in the inorder sequence will be the answer. This process needs to visit $O(\log n +p)$ nodes, where $p$ is the number of nodes lies between $N_1$ and $N_2$ in the inorder sequence. Note that, $\log n+p \le m$, where $m$ is the number of nodes in the tree, as a node visited once cannot be visited again. 
\par
However, it is impossible to use inorder sequence in standard segment trees due to the following two reasons.
\begin{itemize}
	\item The SST or BST does not have inorder thread to efficiently access the next node in the inorder sequence.
	\item The range of the nodes are overlapped with its ancestors.
\end{itemize}
Thus, to do range search on SST or DST, one has to use a known binary tree traversal method to answer the query, and that takes the time of $O(m)$, where $m$ is the number of nodes in the tree. This is due to fact that the properties of segment trees are not useful, in this case, to find the range search. 
\par
Thus, the BITS trees outperforms both SST and DST in case of answering the range query.
\end{proof}
\par
The above analysis shows that BITS trees are better, as compared to the standard segment trees, except in list size requirements. Asymptotically, it is $O(n^2)$ for BITS trees and $O(n\log n)$ for the SST, and may be more for DST as the number of segments does not determine this. However, this is not a major issue, as in any search tree, the time taken to query processing is more important than the space requirements. Once the segments are stored, the overall complexity relies on the number of queries performed. Moreover, in BITS tree the answer (for stabbing query) lies in only one node, but in SST or DST it may spread in several nodes. Due to reduction of height and number of nodes in the tree, it is sure that the BITS trees are much better compared to the standard segment trees. 
\section{Conclusion}
The BITS-tree, proposed in this paper, is a balanced two-way inorder threaded segment tree, which is useful for storing geometric data objects available in one dimension. It handles both stabbing and range queries very efficiently than the standard segment tress. The BITS-trees are not restricted to any bound in range, unlike in the standard segment trees, and that is main advantage of this structure. Thus, one can insert a segment of any range without knowing the overall range the segment tree. Moreover, it is quite possible to generalize this to higher dimension with slight modifications in the data representation.
\bibliographystyle{plain}
\bibliography{hema}
\end{document}